\def\BibTeX{{\rm B\kern-.05em{\sc i\kern-.025em b}\kern-.08em
    T\kern-.1667em\lower.7ex\hbox{E}\kern-.125emX}}
\algrenewcommand\algorithmiccomment[1]{\hskip0pt// #1}
\newtheorem{theorem}{Theorem}[section]
\theoremstyle{definition}
\newtheorem{definition}[theorem]{Definition}
\pgfplotsset{compat=1.17}
\newcommand{\cmark}{\ding{51}}%
\newcommand{\xmark}{\ding{55}}%
\newcommand{\thm}[1]{\hyperref[thm:#1]{Theorem~\ref*{thm:#1}}}
\newcommand{\alg}[1]{\hyperref[alg:#1]{Algorithm~\ref*{alg:#1}}}
\renewcommand{\sec}[1]{\hyperref[sec:#1]{Section~\ref*{sec:#1}}}
\newcommand{\fig}[1]{\hyperref[fig:#1]{Figure~\ref*{fig:#1}}}
\newcommand{\tab}[1]{\hyperref[tab:#1]{Table~\ref*{tab:#1}}}
\newcommand{\df}[1]{\hyperref[def:#1]{Definition~\ref*{def:#1}}}
\DeclareMathOperator{\photon}{photon}
\DeclareMathOperator{\fusee}{fusee}
\DeclareMathOperator{\measuree}{measuree}
\DeclareMathOperator{\abs}{abs}
\DeclareMathOperator{\NetIndex}{LayerIndex}
\DeclareMathOperator{\List}{List}
\DeclareMathOperator{\Parent}{Parent}
\DeclareMathOperator{\maxparent}{MTime}
\DeclareMathOperator{\best}{best}
\DeclareMathOperator{\local}{local}
\DeclareMathOperator{\remote}{remote}
\DeclareMathOperator{\GBP}{GBP}
\DeclareMathOperator{\LSP}{LSP}
\DeclareMathOperator{\new}{new}
\DeclareMathOperator{\current}{current}
\DeclareMathOperator{\init}{init}
\definecolor{lightblue}{rgb}{0.63, 0.74, 0.78}
\definecolor{seagreen}{rgb}{0.18, 0.42, 0.41}
\definecolor{orange}{rgb}{0.85, 0.55, 0.13}
\definecolor{silver}{rgb}{0.69, 0.67, 0.66}
\definecolor{rust}{rgb}{0.72, 0.26, 0.06}
\definecolor{customgray}{rgb}{0.69921875, 0.69921875, 0.69921875}
\definecolor{customgray2}{RGB}{217,217,217}
\definecolor{customred}{rgb}{0.80859375, 0.1484375, 0.15234375}
\definecolor{customred1}{RGB}{197, 58, 50}
\definecolor{customred2}{RGB}{229, 115, 115}
\definecolor{customred3}{RGB}{239, 154, 154}
\definecolor{customred4}{RGB}{255, 205, 210}
\definecolor{customblue}{rgb}{0.12109375, 0.46484375, 0.703125}
\definecolor{customblue1}{RGB}{58,94,158}
\definecolor{customblue2}{RGB}{102,144,190}
\definecolor{customblue3}{RGB}{153,186,213}
\newcommand{\name}{DC-MBQC}
\begin{document}

\title{\name: A Distributed Compilation Framework for Measurement-Based Quantum Computing
}

\author{
    \IEEEauthorblockN{
        Yecheng Xue\textsuperscript{1}, 
        Rui Yang\textsuperscript{1}, 
        Zhiding Liang\textsuperscript{2}, and 
        Tongyang Li\textsuperscript{1}\IEEEauthorrefmark{1}
    }
    
    \vspace{1ex}
    
    \IEEEauthorblockA{
        \textsuperscript{1}School of Computer Science, Peking University, Beijing, China \\
        \textsuperscript{2}Department of Computer Science and Engineering, The Chinese University of Hong Kong, Hong Kong, China \\
        Email: \{mycts, ypyangrui, tongyangli\}@pku.edu.cn, zliang@cse.cuhk.edu.hk
    }
    \thanks{\IEEEauthorrefmark{1}Corresponding author.} 
}

\maketitle

\begin{abstract}
Distributed quantum computing (DQC) is a promising technique for scaling up quantum systems. 
While significant progress has been made in DQC for quantum circuit models, there exists much less research on DQC for measurement-based quantum computing (MBQC), which is a universal quantum computing model that is essentially different from the circuit model and particularly well-suited to photonic quantum platforms. 
In this paper, we propose \name, \textit{the first distributed quantum compilation framework} tailored for MBQC. 
We identify and address two key challenges in enabling DQC for MBQC. 
First, for task allocation among quantum processing units (QPUs), we develop an adaptive graph partitioning algorithm that preserves the structure of the graph state while balancing the workload across QPUs. 
Second, for inter-QPU communication, we introduce the layer scheduling problem and propose an algorithm to solve it. 
Regrading realistic hardware requirements, we optimize the execution time of running quantum programs and the corresponding required photon lifetime to avoid fatal failures caused by photon loss. 
Our experiments demonstrate a $7.46\times$ improvement on required photon lifetime and $6.82\times$ speedup with 8 fully-connected QPUs, which further confirm the advantage of distributed quantum computing in photonic systems. 

The source code is publicly available at \url{https://github.com/qfcwj/DC-MBQC}.
\end{abstract}

\section{Introduction} \label{sec:intro}

In recent years, quantum computing has gained enormous attention due to its potential speedup to solve problems including factorization~\cite{shor1999polynomial}, unstructured database search~\cite{grover1996fast}, and many other problems of practical interest~\cite{dalzell2023quantum}.
As the community moves forward from noisy-intermediate scale quantum era to early fault-tolerance~\cite{preskill2025beyond}, more developed quantum hardware appears with lower error rates in several technology stacks, including superconducting~\cite{ai2024quantum,gao2025establishing}, trapped ion systems~\cite{moses2023race,loschnauer2024scalable}, neutral atom systems~\cite{bluvstein2024logical}, and photonic chips~\cite{psiquantum2025manufacturable, aghaee2025scaling}.
At the current stage, it remains unclear which physical platform could be the final winner of fault-tolerant quantum computing (FTQC) where thousands of logical qubits simultaneously run both algorithmic logic and quantum error correction protocols on top of millions of physical qubits~\cite{gidney2021factor}. 
All of the above platforms would eventually require distributed quantum computing (DQC)~\cite{caleffi2024distributed} to scale up to fault-tolerant regime, where modularized quantum processing units (QPUs) are interconnected with optical fibers~\cite{main2025distributed}.

\begin{table}[t]
  \caption{Survey of distributed entangling generation (without distillation). $*$ means that the fidelity is estimated with post-selection results and can be overestimated. 
   }
  \label{tab:remote-entanglement}
  \resizebox{\columnwidth}{!}{
  \begin{tabular}{c|c|c|c}
    \toprule
    Platform  & Fidelity & Clock speed & Exp. \\
    \midrule
    \midrule
    Superconducting~\cite{leung2019deterministic}  & $79.3\%$ & $\sim$ MHz & \cmark \\
    Quantum dot~\cite{stockill2017phase}& $61.6\%$ & 7.3~kHz & \cmark \\
    \midrule
    Trapped ion~\cite{main2025distributed} & $86.1\%$ & 9.7~Hz & \cmark \\
    Trapped ion~\cite{stephenson2020high} & $94.0\%$ & 182~Hz & \cmark \\
    \midrule
    Neutral atom~\cite{ritter2012elementary}  & $98.7\%$* & 30~Hz & \cmark \\
    Neutral atom~\cite{li2024high}    & $\sim 99.9\%$ & $\sim$ 100~kHz & \textcolor{red}{\xmark} \\
    \midrule
    Photonic  & $99.72\%$*~\cite{psiquantum2025manufacturable} & $\sim$ MHz~\cite{aghaee2025scaling} &  \textcolor{green}{\cmark} \\
    \bottomrule
\end{tabular}
}
\end{table}

There are two important factors for the scalability of DQC: the fidelity of remote entanglement generation and the corresponding clock speed. 
Remote Bell pairs, combined with local unitary operations, are the essential mechanism for enabling arbitrary remote two-qubit gates (including CNOT and SWAP) in DQC. However, this process is inherently error-prone, typically introducing higher effective error rates than local operations.
According to \cite{sinclair2024fault}, it is estimated to require a remote entanglement fidelity above $90\%$ and a clock speed of MHz level to maintain the efficacy of quantum error correction (QEC) protocols in DQC.
By reaching these two thresholds, the error originated from noisy remote entangling gates can be suppressed as the distance of QEC codes increases and the background decoherence errors remain negligible compared to other noise sources in a fast QEC cycle.

We summarize recent progress of demonstrating this capability of generating remote entanglement between two QPUs in \tab{remote-entanglement}. 
Solid-state quantum hardware such as superconducting qubits and quantum dots typically have higher system clock rate but lower remote gate fidelity compared to atom-based systems.
Trapped ion systems have demonstrated the highest remote entanglement generation fidelity to date, which is $94\%$ without requiring post-selection~\cite{stephenson2020high}.
Ref.~\cite{main2025distributed} recently demonstrated a remote CZ gate with $86.1\%$ fidelity but with only 9.7 Hz of clock speed due to extra cooling requirement.
Neutral atom systems are becoming more popular in recent years because of their potential to run QEC experiments on a single QPU.
And there are several promising proposals~\cite{young2022architecture,covey2023quantum,li2024high} to improve the DQC capability for reconfigurable atom arrays, especially improving the remote entanglement generation rate from $10$~Hz~\cite{ritter2012elementary} to $10^5$~Hz level through atom multiplexing. 
Finally, photonic systems seem to be the most suitable candidate for DQC due to high remote entanglement fidelity ($99.72\%$) and high clock speed ($\sim$ MHz), demonstrated by the Omega chipset from PsiQuantum~\cite{psiquantum2025manufacturable} and Aurora system from Xanadu~\cite{aghaee2025scaling}.

Despite the potential advantages of implementing DQC, the associated compiler design for photonic systems is underexplored in the computer architecture community, even in the single QPU setting. 
OneQ~\cite{zhang2023oneq} is the first scalable compilation framework for measurement-based quantum computing (MBQC) on photonic hardware~\cite{raussendorf2003measurement,briegel2009measurement}, where the authors described multiple steps to map a quantum program in the circuit model to graph-based computation on photonic resource states and optimize the number of error-prone fusion operations~\cite{bartolucci2023fusion}.
Conventionally, quantum programs are designed using circuit models due to its simplicity and similarity to classical logical circuits.

In MBQC, more specifically one-way quantum computing (1WQC)~\cite{raussendorf2001one}, the computations are usually implemented in three steps. 
First, we prepare a large entangled state called the {\it resource state}, which is usually defined on a graph. 
In practice, it is experimentally hard to prepare and maintain such large-scale graph states~\cite{hein2004multiparty} with high enough fidelity to carry out meaningful computations.
The more scalable approach is to prepare a collection of small resource states and dynamically combine them into a larger resource state through fusion operations~\cite{bartolucci2023fusion}. %
Second, we implement a sequence of single qubit projective measurements known as the {\it measurement patterns}, which may depend on previous measurement results. 
Finally, a {\it byproduct correction} is required to address the inherent randomness of quantum measurement outcomes, as these random results cause the evolution of the unmeasured state to be non-deterministic and potentially subject to additional undesired evolutions.
We introduce the details of MBQC and fusion in \sec{mbqc} and \sec{mbqc-assumptions}.

Based on the results of OneQ, Ref.~\cite{mo2024fcm} further adopted circuit cutting technique to decompose a large circuit into several small sub-circuits and reduce the number of fusion operations in each sub-circuit.
However, this approach could face an exponential overhead when post-processing measurement statistics from each sub-circuit~\cite{jing2025circuit}.
More recently, OnePerc~\cite{zhang2024oneperc} observed that the probabilistic nature of fusion operations could lead to dramatic change to the computation graph in MBQC when fusion fails.
These efforts all lead to a more efficient and practical compiler design for MBQC photonic systems.

However, these previous works posed strong assumptions on the photonic hardware, especially the clock cycle of generating resource states at GHz level (1~ns/cycle) and photon storage time in the fibre-optical {\it delay line}~\cite{zhang2024oneperc, bombin2021interleaving}.
By overestimating these hardware requirements, the probability of losing a photon when executing long range operations could be comparable to the failure rate of error-prone fusion operations and lead to fatal failures when executing large-scale quantum programs on photonic systems.
Previous literature~\cite{bombin2021interleaving, zhang2023oneq} including OneQ estimates that the maximum photon lifetime stored in the delay line is about 5000 cycles with $\sim5\%$ photon loss probability.
According to \fig{required-lifetime}, this probability can be suddenly as high as $36.9\%$ and $99.9\%$ if the corresponding resource state generation rate becomes 10~ns/cycle and 100~ns/cycle, even higher than the current experimental fusion failure rate of $29\%$~\cite{guo2024boosted}.
Although it is possible to achieve the 1~ns/cycle clock rate in certain experimental settings~\cite{paesani2019generation,eltes2020integrated}, there is not yet clear evidence whether commercial photonic quantum hardware~\cite{psiquantum2025manufacturable,aghaee2025scaling} can achieve the same speed for generating resource states.
This imposes a fundamental restriction on the lifetime for which a photon can be stored in the delay line.

\begin{figure}[ht]
\centering
\begin{tikzpicture}
\node at (2.0,2.0) {
\includegraphics[width=0.16\textwidth]{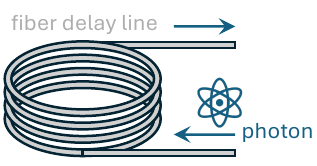}
};
\node at (0,3) {
\begin{tikzpicture}
    \begin{axis}[
        xlabel={System clock cycles},
        ylabel={Photon loss probability},
        ylabel style={yshift=-5pt},
        xmin=100,
        xmax=5200,
        ymin=0.6e-3, %
        ymax=2e0, %
        ymode=log,
        ytick={1e-3, 1e-2, 1e-1, 1e0},
        width=8cm,
        height=5.7cm,
        max space between ticks=40,
        legend style={draw=none, fill=none, at={(0.5,1.01)}, anchor=south, legend columns=3, column sep=4pt },
        clip mode=individual,
        legend cell align={left},
    ]
    \fill[gray, opacity=0.2] (axis cs:100,0.29) rectangle (axis cs:5200,1);
    \draw[dashed, gray, thick, opacity=0.5] (axis cs:100,0.045) -- (axis cs:5200,0.045);
    \addplot[
        very thick,
        mark=*,
        mark size=2pt, %
        mark options={fill=white, draw=customred1},
        color=customred1,
        opacity=1
    ] table [
        x=depth,
        y=probability,
        col sep=comma,
        header=true
    ] {data/lifetime/cycle-100ns.txt};
    \addlegendentry{100~ns/cycle};

    \addplot[
        very thick,
        mark=*,
        mark size=2pt, %
        mark options={fill=white, draw=customred2},
        color=customred2,
        opacity=1
    ] table [
        x=depth,
        y=probability,
        col sep=comma,
        header=true
    ] {data/lifetime/cycle-10ns.txt};
    \addlegendentry{10~ns/cycle};

    \addplot[
        very thick,
        mark=*,
        mark size=2pt, %
        mark options={fill=white, draw=customred3},
        color=customred3,
        opacity=1
    ] table [
        x=depth,
        y=probability,
        col sep=comma,
        header=true
    ] {data/lifetime/cycle-1ns.txt};
    \addlegendentry{ 1~ns/cycle};

    \end{axis}
\node at (4.8,2.7) {\footnotesize \textcolor{gray}{Fusion failure: $29\%$ ~\cite{guo2024boosted}} };
\node at (1.2,2.0) {\footnotesize \textcolor{gray}{$\geq5\%$} };
\end{tikzpicture}
};
\end{tikzpicture}
\caption{Demonstration of the effect of different resource state clock rates on photon loss probability.
The probability is estimated as $1-e^{-\alpha L}$, where $\alpha=0.2$ dB/km is the attenuation rate in state-of-the-art optical fibers and $L$ is the distance a photon traveled with $2/3$ speed of light in delay lines.
}
\label{fig:required-lifetime}
\end{figure}

Recent efforts have begun to address this critical physical limitation. 
For instance, OneAdapt~\cite{zhang2025oneadaptadaptivecompilationresourceconstrained} identified that a primary cause for long photon storage times arises from the need to route connections between distant computational steps. 
In the compilation model of its predecessor, OnePerc~\cite{zhang2024oneperc}, this could require storing a photon in a delay line for an unbounded number of clock cycles. 
To mitigate this, OneAdapt proposed a ``dynamic refresh'' mechanism. This technique actively manages the storage duration of each photon, refreshing those about to exceed a predefined lifetime limit by remapping them in the computation. 
This bounds the storage time required for such long-range connections.
However, OneAdapt focuses primarily on the storage time induced by routing, while a more comprehensive framework to systematically account for all sources of the required photon lifetime is still needed.

\begin{figure}[ht]
\centering
\includegraphics[width=0.48\textwidth]{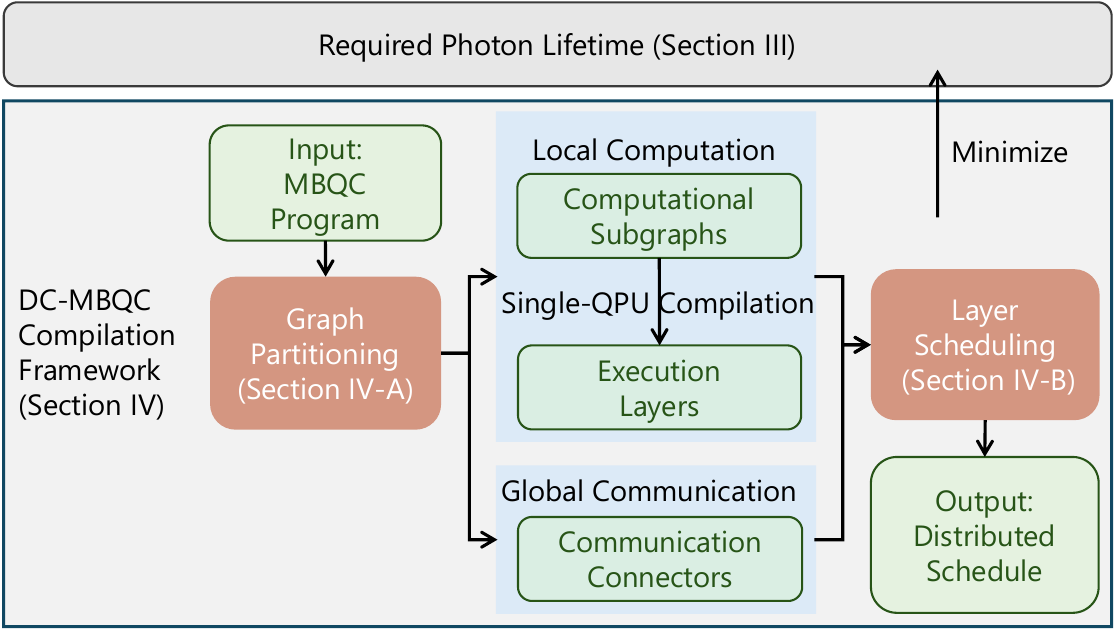}
\caption{An overview of our approach for minimizing the required photon lifetime in distributed MBQC.}
\label{fig:pipeline}
\end{figure}

To address the aforementioned challenges and pave the way for scalable MBQC, we propose a comprehensive solution that systematically formalizes and alleviates the burden on photon storage resources via distributed compilation (\fig{pipeline}). Our main contributions in this framework are listed as follows.
\begin{itemize}
    \item We formalize the \textbf{required photon lifetime} (\sec{lifetime}), which is the first metric to unify all sources of photon storage duration in delay lines, providing a fundamental target for optimizing MBQC compilers.
    
    \item We propose \textbf{\name} (\sec{framework}), the first compilation framework enabling distributed MBQC, with a primary focus on minimizing the required photon lifetime. Our framework decomposes the compilation into two key sub-problems: adaptive graph partitioning (\sec{graph-partitioning}) and layer scheduling (\sec{layer-mapping}), for which we develop highly effective algorithms. The framework's modular design ensures compatibility with various single-QPU compilers, paving the way for future integration with quantum error correction.
    
    \item Our experimental results strongly validate the effectiveness of our distributed approach. With 8 fully-connected QPUs, \name\ reduces the required photon lifetime by up to \textbf{7.46$\times$} and achieves a \textbf{6.82$\times$} speedup, demonstrating a viable path towards scalable and efficient photonic MBQC systems.
\end{itemize}

\section{Background} \label{sec:background}
\subsection{Photonic MBQC Program} \label{sec:mbqc}
An MBQC program is conceptually defined by two components: a \textit{graph state} and a corresponding \textit{measurement pattern}. The foundation of the computation is the graph state, a large-scale, multi-qubit entangled state constructed upon an undirected graph $G=(V, E)$. In this graph, each vertex $i \in V$ represents a qubit, and each edge $(i, j) \in E$ represents an entanglement between the corresponding pair of qubits. The state is uniquely defined as the $+1$ eigenstate of a set of commuting stabilizer operators:
\begin{equation*}
    K_i = X_i \prod_{j \in N(i)} Z_j, \quad \forall i \in V,
    \label{eq:graph-stabilizer}
\end{equation*}
where $X_i$ and $Z_j$ are the Pauli-X and Pauli-Z operators acting on qubits $i$ and $j$ respectively, and $N(i)$ is the set of neighbors of vertex $i$ in the graph $G$.

Computation in MBQC proceeds by executing the measurement pattern on the qubits of the graph state. A measurement pattern is a sequence of adaptive single-qubit measurements $M_i^\alpha$, where the parameter $\alpha$ of later measurements may depend on the outcomes of earlier ones. $M_i^\alpha$ represents a destructive measurement on qubit $i$, projecting it onto the basis $\{\ket{\pm_{\alpha}}_i = (\ket{0} \pm e^{i\alpha}\ket{1})/\sqrt{2}\}$ and yields a probabilistic outcome $s_i \in \{0, 1\}$.

This single, classical outcome $s_i$ is immediately fed forward to adapt subsequent measurements. This adaptation is Pauli \textit{byproduct corrections} ($X$ or $Z$) on subsequent neighbor qubits. The correction type is pre-determined by the MBQC program's design while the real-time outcome $s_i$ acts as a classical trigger, determining if that pre-defined correction is applied (e.g., when $s_i=1$) to ensure the computation remains deterministic. These corrections proceed in a single run and do not require multiple shots, since the correction logic relies on this single bit $s_i$, not on a probability distribution. 
Practically, these corrections are not applied as new gates; rather, they are absorbed by the measurement parameter $\alpha$.
A qubit to be measured by $M_i^{\alpha}$ with corrections $X^s$ and $Z^t$ ($s, t \in \{0, 1\}$) equals a qubit measured by
$$
M_i^{\alpha}X_i^sZ_i^t = M_i^{(-1)^s\alpha+t\pi}.
$$
These dependencies are formally captured by a directed acyclic graph, the \textit{dependency graph} $G'=(V, E')$. In this graph, an edge $(i, j)$ signifies that the measurement basis of qubit $j$ depends on the outcome of qubit $i$. These dependencies are classified as X-dependencies or Z-dependencies, corresponding to the type of Pauli correction required. Z-dependencies can be propagated to the end of the computation and handled classically via a technique known as signal shifting~\cite{BROADBENT20092489}, thereby removing them from the real-time dependency constraints.

This model describes MBQC at an abstract, logical level. From the practical perspective, most of MBQC programs require a huge graph state which is almost impossible to be entirely created at the beginning of computation. This fundamental difficulty motivates an alternative paradigm: generating the graph state dynamically and consuming its qubits incrementally~\cite{Browne_2005, Nielsen_2004}, which forms the basis of the photonic implementation discussed in \sec{mbqc-assumptions} and \sec{compiler}.

\subsection{Photonic MBQC Architecture} \label{sec:mbqc-assumptions}
As it is extremely hard to create and preserve a large graph state beforehand, photonic architectures overcome this challenge by dynamically generating the graph state during the computation. The core strategy is to start from small, standardized, and easily prepared \textit{resource states}, such as the ring or star shape states shown in \fig{rs}~(a). 
These states are then progressively combined together into a larger entangled structure by \textit{fusion}. As depicted in \fig{rs}~(b), a fusion operation consumes one photon from each of two resource states, entangling the neighbors of the original photons. Any remaining, unnecessary photons in the constructed graph can be disentangled from the computation by a simple Z-basis measurement, which, due to signal shifting~\cite{BROADBENT20092489}, does not introduce real-time dependencies. By this method, any arbitrary graph state can be created incrementally.

This dynamic generation process is orchestrated by the hardware architecture shown in \fig{mbqc-arch}. The primary components include a 2D-arranged grid of resource state generators (RSGs), a collection of fusion devices, and measurement devices. In each system clock cycle, every RSG produces a resource state. The set of all resource states generated simultaneously in one cycle constitutes a \textit{resource state layer}, establishing a discrete time dimension for the computation.

\begin{figure}[ht]
\centering
\includegraphics[width=0.45\textwidth]{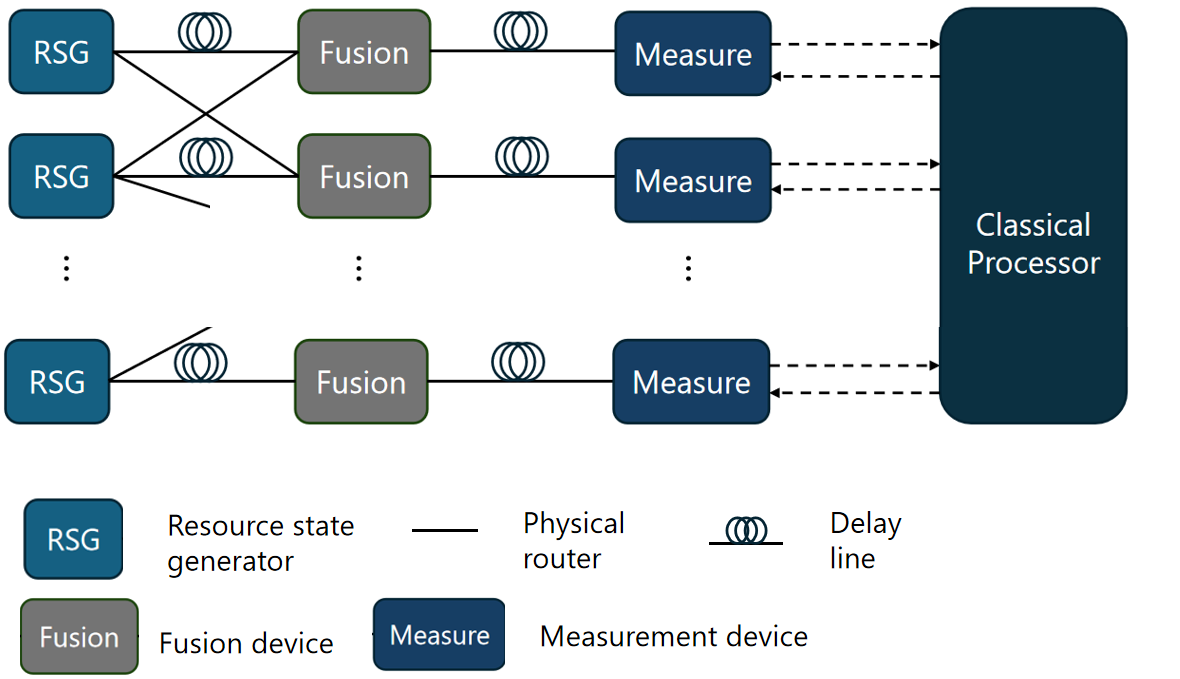}
\caption{MBQC architecture with $1$-dimensional arranged Resource State Generators (RSGs).}
\label{fig:mbqc-arch}
\end{figure}

Because of hardware constraints, many architectures assume that only local fusions are allowed. Typically, \textit{intra-layer fusion} is permitted between neighboring RSGs within the same layer, and \textit{inter-layer fusion} is permitted between subsequent layers from the same RSG~\cite{zhang2023oneq, zhang2024oneperc}. To realize arbitrary, long-range logical connections that are not directly supported by these local fusions, a fundamental hardware capability known as \textit{routing} is employed. As illustrated in \fig{rs}~(c), routing uses a chain of fusions to effectively create entanglement between two distant qubits.

These architectural constraints and measurement dependencies outlined in \sec{mbqc} pose a key system-level challenge: photon storage. Photons often need to wait for synchronization in fusion or for classical signals determining measurement bases, physically realized with delay lines (optical fibers calibrated to specific clock cycles). 
However, prolonged storage directly increases the risk of photon loss, a failure mode that fatally compromises program fidelity, especially at realistic clock rates. Despite this critical impact, the relationship between compilation choices and storage-induced fidelity degradation has lacked a formal, high-level metric. We address this gap by proposing the \textit{required photon lifetime} in \sec{lifetime} as a fundamental metric to evaluate and optimize compiler performance.

\begin{figure}[ht]
\centering
\includegraphics[width=0.45\textwidth]{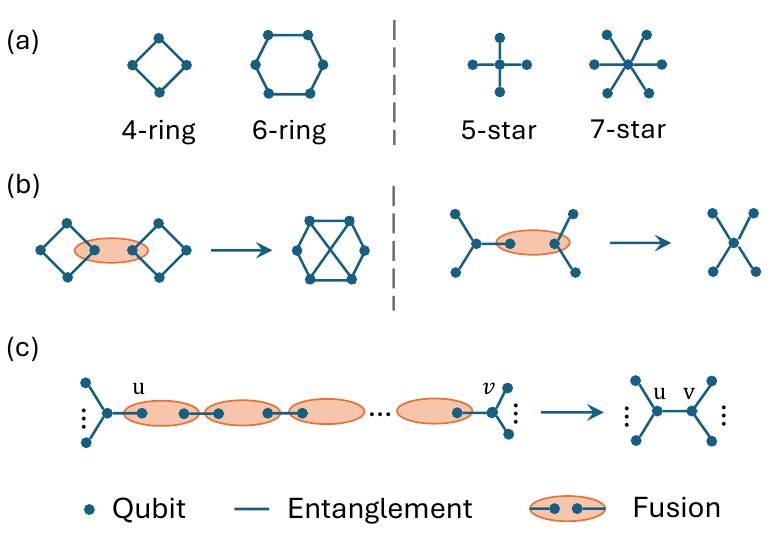}
\caption{(a) Different resource state graphs;
(b) Consequence of fusion on resource states;
(c) Illustration of routing. After the fusion process, the photon $u$ and $v$ are connected, forming an entangled pair.}
\label{fig:rs}
\end{figure}

To enable distributed computation, inter-QPU connections can be established via heralded entanglement~\cite{Barrett_2005, Hucul_2014}. By this method, the generation of communication resource is physically and logically decoupled from the main computational flow. Through this mechanism, ancillary photons are generated by dedicated communication ports that are often located at the edges of a QPU chip. An ancillary photon is promoted to a usable communication resource upon the arrival of a classical heralding signal confirming a successful remote connection. This ancillary photon can then be fused with a computational photon to bridge the two QPUs.

\subsection{Photonic MBQC Compiler}\label{sec:compiler}

The core task of a photonic MBQC compiler is to map the abstract logical  MBQC program, defined by a graph state and measurement pattern, onto the constrained physical hardware. This process can be understood in two main stages:  problem abstraction and spatio-temporal mapping. 

In the first stage, the compiler establishes a model of the computation, with different compilers using different abstraction strategies. One approach is to abstract the requirement of the input logical program into a form that is suitable for direct hardware mapping. For instance, the OneQ compiler~\cite{zhang2023oneq} constructs a fusion graph, where each node corresponds to a physical resource state and each edge represents a fusion operation. This representation directly describes the physical actions the hardware needs to perform. An alternative approach is to abstract the physical hardware per se, including OnePerc~\cite{zhang2024oneperc} and OneAdapt~\cite{zhang2025oneadaptadaptivecompilationresourceconstrained} that create a logical intermediate representation (IR).
This IR can be viewed as a discrete 3D (2D-spatial + 1D-temporal) grid, consisting of a time-ordered sequence of 2D logical layers. In this grid, nodes represent qubits and edges represent allowed entanglement between corresponding qubits.
This allows the input graph state to be mapped directly onto this logical IR, while the execution task on probabilistic hardware is handled by a separate mechanism named 2D-renormalization~\cite{zhang2024oneperc}. The viability of this logical abstraction stems from a key finding: the number of physical layers needed to produce one reliable logical layer (the PL Ratio) stabilizes around a constant value~\cite{zhang2024oneperc}. This allows high-level compilers to plan and optimize at the logical layer, decoupling the mapping problem from the hardware's inherent randomness.
Thus, this first stage transforms the compilation problem into a well-defined problem: mapping an input \textit{computation graph} onto a constrained \textit{3D resource grid}.

In the second stage, the compiler solves this mapping problem. It assigns each node of the computation graph onto a node in the 3D spatio-temporal grid, ensuring every pair of connected nodes in the computation graph still connected in the 3D grid, directly or through other nodes. The final output is a sequence of \textit{execution layers}, which precisely dictates the physical or logical operations for each clock cycle. Executing this sequence of layers completes the MBQC program.

\section{Required Photon Lifetime}\label{sec:lifetime}
In this section, we propose the concept of required photon lifetime $\tau_{\photon}$ as a key metric to measure the performance of a compiler on photonic MBQC architectures, moving beyond using only program execution time and capturing the critical hardware constraint of finite photon storage time.

Photons in a photonic MBQC architecture can be divided into the following three different kinds according to their roles in an MBQC program, each giving rise to required lifetime subject to distinct temporal constraints as illustrated in \fig{lifetime}.
\begin{itemize}
    \item \textit{Fusee: }A photon sent to a fusion device and fused with another fusee. A pair of fusees have to be synchronized at the fusion device. Consequently, the fusee on the earlier execution layer have to be stored in the delay line to await the arrival of its fusion partner.
    \item \textit{Measuree: }A photon sent to a measurement device and measured with a determined basis. The basis may depend on the measurement outcomes of other measurees. Therefore, it must be stored in a delay line until all dependencies are resolved and its basis is determined.
    \item \textit{Removee: }A photon that is unnecessary in the program and removed via a measurement in the Pauli-Z basis. Since Z-dependencies can be propagated to the end of the computation and handled classically by signal shifting~\cite{BROADBENT20092489}, removees do not contribute to the required photon lifetime.
\end{itemize}

We define \textit{required photon lifetime} of a compiled MBQC program as the maximum number of clock cycles that a photon has to be stored in the delay line. \alg{rl-calculation} shows the detailed calculation. In this algorithm, $G$ is the dependency graph of the MBQC program and execution layers are compiler output. $\NetIndex(u)$ is the index of the corresponding execution layer of photon $u$. In calculation of $\tau_{\measuree}$, $\Call{TopoSort}{G}$ sorts all nodes in $G$ in topological order. By this method, $\maxparent[u]$ computes the earliest measurable time of photon $u$, and $\tau_u$ becomes the required lifetime. We assume that a photon arrives the measurement device one clock cycle after it is generated, and computation of measurement basis requests one clock cycle.

\begin{figure}[ht]
\centering
\includegraphics[width=0.45\textwidth]{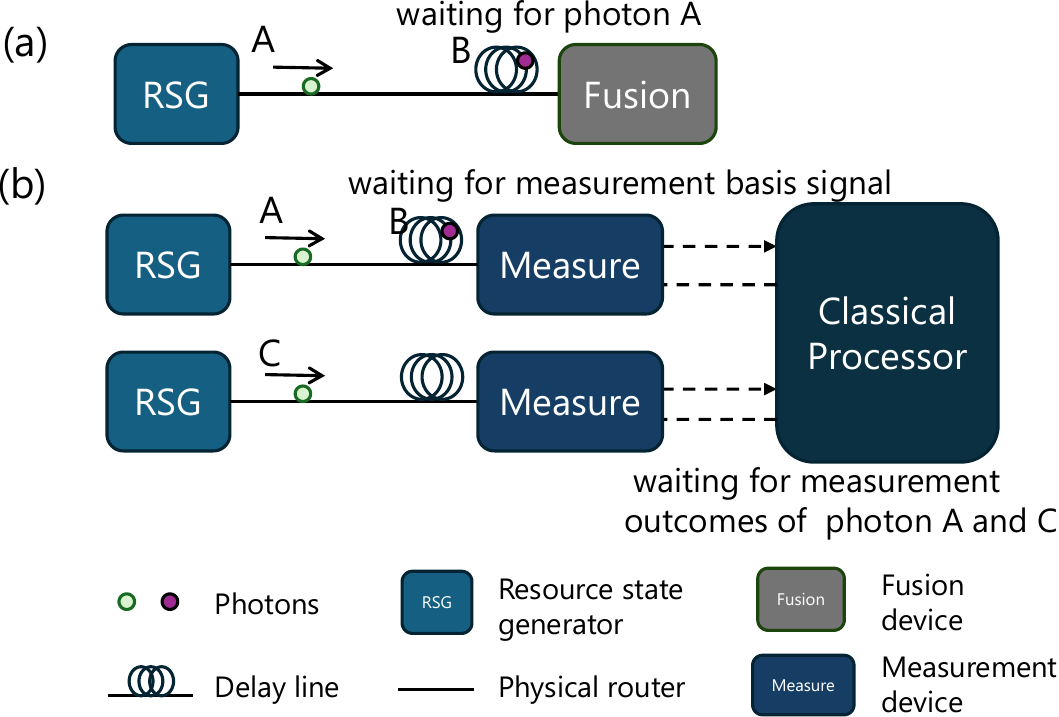}
\caption{
Illustration of primary sources of required photon lifetime. (a) A fusee (photon B) must wait for its fusion partner (photon A) when they are generated in different execution layers. 
(b) A measuree (photon B) must be stored while waiting for classical information. This wait occurs because its measurement basis depends on measurement outcomes of other photons (A and C).
}\label{fig:lifetime}
\end{figure}

\begin{algorithm}[tb]
\caption{Required Photon Lifetime Calculation}
\label{alg:rl-calculation}
\begin{algorithmic}[1]
\Require Execution layers; dependency graph $G$
\Ensure Required photon lifetime $\tau_{\photon}$
\Function{ComputeLifetime}{$G, \text{execution layers}$}
  \Statex  \Comment{Part 1: compute fusee required lifetime $\tau_{\fusee}$}
    \State $\tau_{\fusee} \gets 0$
    \For {each pair of fusees $(u, v)$}
        \State $\tau_{(u,v)} \gets \abs(\NetIndex(u)-\NetIndex(v))$
        \State $\tau_{\fusee} \gets \max(\tau_{\fusee}, \tau_{(u,v)})$
    \EndFor
   \Statex \Comment{Part 2: compute measuree required lifetime $\tau_{\measuree}$}
    \State $\tau_{\measuree} \gets 0$
    \For {each node $u$ in $\List(\Call{TopoSort}{G})$}
        \State $\maxparent[u] \gets \NetIndex(u) + 1$
        \For {each parent $v$ in $\Parent(u)$}
            \State $\maxparent[u] \gets \max(\maxparent[v] + 1,\maxparent[u])$
        \EndFor
        \State $\tau_u \gets \maxparent[u]- \NetIndex(u)$

        \State $\tau_{\measuree} \gets \max(\tau_{\measuree}, \tau_u)$
    \EndFor
   \Statex \Comment{Part 3: return required photon lifetime}
    \State $\tau_{\photon} \gets \max(\tau_{\fusee}, \tau_{\measuree})$
    \Return $\tau_{\photon}$
\EndFunction
\end{algorithmic}
\end{algorithm}

Distributed compilation introduces another kind of photon, \textit{connectors}, which is defined in \sec{framework} and the corresponding required lifetime calculation is addressed in \df{LSP}.

\section{\name\ Framework}\label{sec:framework}
To overcome the scalability limitations of single QPU, various distributed quantum computing approaches emerge, %
including circuit distribution and circuit cutting~\cite{Barral_2025}.

Circuit distribution seeks to execute a large quantum circuit that exceeds the capacity of a single QPU by mapping it onto multiple QPUs connected via a quantum network. The computation is performed through unitary evolution on long-lived, stateful qubits, while inter-QPU communication is realized via remote operations such as teledata and telegate. The core compiler challenge lies in allocating logical qubits across QPUs and implementing remote two-qubit gates under resource constraints~\cite{caleffi2024distributed, Ferrari_2021}. In contrast, MBQC programs are performed via single-qubit measurements on a large, entangled graph state, and interconnection between QPUs can be realized through photonic fusion. This fundamental difference in the underlying computational model necessitates a dedicated compiler framework, with existing techniques for circuit distribution not directly applicable.

The other main approach, circuit cutting, aims to decompose a large quantum circuit into smaller subcircuits that are executable on limited devices, then reconstructs global observable results through classical post-processing~\cite{Peng_2020, Tang_2021}. However, this reconstruction typically introduces exponential sampling overhead~\cite{jing2025circuit}. MBQC can avoid this bottleneck by using fusion to establish inter-QPU communication, making the core focus different from circuit cutting.

As far as we know, this paper is the first one to propose a framework of physically distributed computation tailored to MBQC. A distributed compilation framework has to address the following two challenges:
\begin{itemize}
    \item \textbf{Workload distribution.} Since input MBQC programs are modeled as computation graphs by compilers, the workload distribution is to partition the computation graph across multiple QPUs, balancing workload and minimizing costly inter-QPU communication. 
    \item \textbf{Communication and synchronization.} The partitioning of the computation graph inherently severs logical entanglements, creating pairs of \textit{connectors}, which are qubits at the boundaries of the newly formed subgraphs. This gives rise to a key challenge that efficiently re-establishing these connections across QPUs.
\end{itemize}

While this can be physically achieved by fusing connectors with ancillary communication resources (see \sec{mbqc-assumptions}), routing connectors to communication resources remains a formidable challenge. A naive approach might dedicate fixed physical ``edge zones'' on each QPU for this purpose as illustrated in \fig{zone}~(a). However, this imposes rigid spatial constraints that would severely limit the optimization capacity of placement-agnostic compilers such as OneQ~\cite{zhang2023oneq}, OnePerc~\cite{zhang2024oneperc}, and OneAdapt~\cite{zhang2025oneadaptadaptivecompilationresourceconstrained}. To overcome this, we leverage the temporal dimension by introducing \textit{connection layers} in the 3D resource grid. Connectors are routed to these connection layers by inter-layer fusion and then routed to corresponding communication resource by intra-layer fusions of computation resource on these special layers, as illustrated in \fig{zone}~(b). This strategy achieves a crucial spatio-temporal decoupling, liberating the single-QPU compiler to optimize the computational layout without being constrained by the fixed physical locations of communication ports.

\begin{figure}[ht]
\centering
\includegraphics[width=0.45\textwidth]{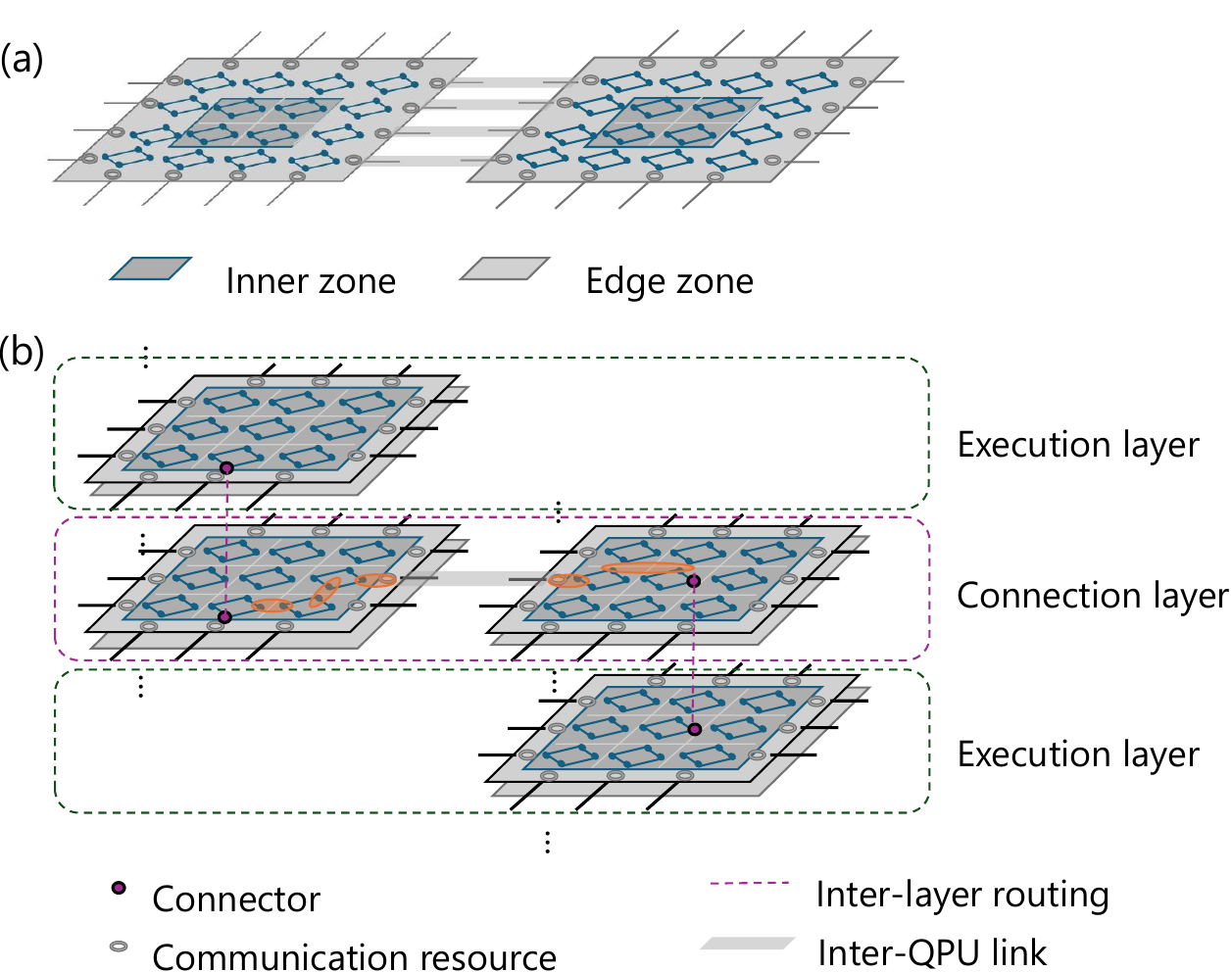}
\caption{(a) Illustration of fixed edge zones. (b) Illustration of inter-QPU communication via connection layer.}\label{fig:zone}
\end{figure}
 
However, on a fixed connection layer, the number of connectors that can be routed to communication resources concurrently is essentially an instance of the edge-disjoint paths (EDP) problem on a grid. This problem, which involves finding the maximum number of non-overlapping paths between specified terminals, is known to be NP-hard~\cite{Garey1979JohnsonCA}. In addition, the number and location of available communication resources are dependent to specific hardware. Since our framework focuses on high-level design and scheduling, we abstract the complexity of concurrent on-layer routing into a single, practical hardware parameter, the connection capacity $K_{\max}$, which represents the maximum number of concurrent connections a single layer can support.

Although NP-hardness exists, we can provide upper and lower bounds for this $K_{\max}$. By Menger's theorem, the maximum number of disjoint paths, which is an upper bound of $K_{\max}$, is limited by the minimum cut of the connector set and the communication resource set on the grid. This implies $K_{\max} = O(L)$ for a layer of size $L \times L$. A straightforward lower bound is $K_{\max} \geq 4$ since there are four edges. Experimentally, we conduct sensitivity analysis in \sec{sensitivity}, demonstrating high-performance of our \name\ compiler when $K_{\max} = 4$.

Based on aforementioned abstractions, our \name\ framework decomposes the distributed compilation into two sub-problems:
\begin{itemize}
    \item \textbf{Graph partitioning.} In \sec{graph-partitioning}, we propose an adaptive graph partitioning algorithm to partition the computation graph, navigating the trade-off of load balancing and communication minimizing, as well as preserving local structure of subgraphs for efficient single-QPU compilation.
    \item \textbf{Layer scheduling.} In \sec{layer-mapping}, we formalize the challenge about inter-layer scheduling as the layer scheduling problem, prove the NP-hardness and provide efficient heuristic solution.
\end{itemize}

\subsection{Graph Partitioning}\label{sec:graph-partitioning}
The first core task of our distributed compiler is to partition the overall computation across multiple QPUs. While it is conceptually similar to graph partitioning in the dominant circuit model~\cite{automated2019, timeslice2020}, our MBQC context is fundamentally different. Circuit-based partitioners operate on an interaction graph of persistent qubits, with the goal of minimizing the number of non-local operations such as teleportation. 
In contrast, our MBQC model is built upon a graph state of transient photons, with the primary optimization objective being the required photon lifetime at the final stage. This shifts the focus beyond mere inter-QPU communication demands, necessitating consideration of subsequent single-QPU compilation and layer scheduling.
Therefore, existing gate-based approaches are not directly applicable in distributed MBQC.

In our framework, we partition the 
computation graph with each node representing a fundamental resource unit; thus, achieving a balanced partition directly contributes to workload balancing.
Within the context of MBQC, an effective partitioning algorithm must additionally co-optimize two often competing objectives:
\begin{itemize}
    \item Minimized Communication: The number of edges cut between partitions should be minimized, as each cut corresponds to a costly instance of inter-QPU communication.
    \item Preserved Local Structure: The partitions should maintain high quality in their internal structure. Subgraphs with strong internal connectivity allow for more efficient compilation by single-QPU compilers~\cite{zhang2023oneq}.
\end{itemize}

These two objectives present a trade-off. Standard $k$-way partitioning algorithms~\cite{KARYPIS199896} excel at minimizing cuts while enforcing a strict load balance, but they ignore the subgraph structure. Conversely, community detection algorithms such as~\cite{Blondel_2008, Traag_2019} are designed to maximize modularity~\cite{PhysRevE.69.066133}, which is a widely recognized metric that quantifies the quality of a partition by measuring the density of connections within subgraphs (communities) relative to connections between them. However, these algorithms do not guarantee a fixed number of partitions or workload balance.

To navigate this trade-off, we propose an adaptive graph partitioning algorithm (\alg{graph-partition}) that integrates the strengths of both approaches and performs a heuristic search through the imbalance-modularity trade-off space. The algorithm begins with a perfectly balanced partition ($\alpha=1$) generated by the METIS library, which implements a standard multilevel k-way partitioning scheme~\cite{KARYPIS199896}. It then iteratively probes for better partitions by relaxing the balance constraint ($\alpha$), using a multiplicative step factor ($\gamma$). The search accepts a new, less balanced partition only if it yields a modularity gain greater than $\epsilon_Q$, and terminates when the modularity gain stagnates or the maximum imbalance ($\alpha_{\max}$) is reached. This process efficiently finds a partition that is both well-structured (high $Q$) and reasonably balanced (less than $\alpha_{\max}$).

\begin{algorithm}
\caption{Adaptive Graph Partitioning}\label{alg:graph-partition}
\begin{algorithmic}[1]
\State \textbf{Input:} computation graph $G$,  modularity improvement threshold $\epsilon_Q$, maximum imbalance factor $\alpha_{\max}$, learning rate $\gamma$
\State \textbf{Output:} best partition $P_{\best}$

\State $\alpha \gets 1$
\State $Q_{\best} \gets -1$
\While {$\text{True}$}
    \State $P \gets \text{Partition}(G, \alpha)$
    \State $Q \gets \text{Modularity}(P)$
    \If{$Q > Q_{\best}$}
        \State $Q_{\best} \gets Q$, $P_{\best} \gets P$
    \EndIf
    \State $\Delta Q \gets Q - \text{previous modularity}$    
    \If{$\Delta Q > \epsilon_Q$ and $\alpha < \alpha_{\max}$}
        \State $\alpha \gets \alpha \cdot \gamma$
    \ElsIf{$\Delta Q < -\epsilon_Q$}
        \State $\alpha \gets \alpha/ \gamma$
    \Else
        \State \textbf{break}
    \EndIf
\EndWhile
\end{algorithmic}
\end{algorithm}

\subsection{Layer Scheduling}\label{sec:layer-mapping}
We first formalize the layer scheduling problem as follows.
In this formalism, we model the compilation output as two distinct types of tasks: Main Tasks ($J_{i,j}$), which represent the execution layers compiled for a specific QPU, and Synchronization Tasks ($S_k$), which represent the inter-QPU communication events.
\begin{definition}[Layer Scheduling Problem]\label{def:LSP}
The problem is to optimize the scheduling for two classes of tasks, main task and synchronization task, on a set of $n$ QPUs, $\mathcal{Q} = \{Q_1, \ldots, Q_n\}$, over a discrete time horizon $\mathcal{T} = \{1, \ldots, T\}$. An identical connection parameter $K_{\max}$ is given.

\paragraph{Tasks}
\begin{itemize}
    \item Main task. Each QPU $Q_i$ is statically assigned a set of main tasks, $\mathcal{J}_i = \{J_{i,1}, \ldots, J_{i,m_i}\}$.
    \item Synchronization task. A separate set of synchronization tasks $\mathcal{S} = \{S_1, \ldots, S_K\}$ is defined, where each $S_k$ is associated with a pair of main tasks $(J_{i,j}, J_{i',j'})$ on distinct processors ($i \neq i'$).
\end{itemize}

\paragraph{Decision Variables}
The decision variables are the start times for each task:
\begin{itemize}
    \item $j_{ij} \in \mathcal{T}$: The start time of main task $J_{i,j}$.
    \item $s_k \in \mathcal{T}$: The start time of synchronization task $S_k$.
\end{itemize}

\paragraph{Constraints}
The schedule must adhere to the following constraints:
\begin{itemize}
    \item \textbf{Machine Exclusivity:} A processor $Q_i$ can execute at most one main task or up to $K_{\max}$ synchronization tasks at any given time.
    \begin{align*}
      &  \forall i \in \{1,\ldots,n\},\ \forall t \in \mathcal{T},\\
& \sum_{j=1}^{m_i} \mathbb{I}(j_{ij} = t)  + \left\lceil \frac{1}{K_{\max}} \sum_{S_k \text{ involves } Q_i} \mathbb{I}(s_k = t) \right\rceil \le 1
    \end{align*}
    where $\mathbb{I}(\cdot)$ is the indicator function.
    \item \textbf{Synchronization:} A synchronization task $S_k$ associated with processors $M_i$ and $M_{i'}$ must execute simultaneously. This is inherently enforced by assigning a single decision variable $s_k$ for its start time.
\end{itemize}

\paragraph{Objective Function}
The objective is to find a schedule that minimizes the required photon lifetime defined in \sec{lifetime}. 
As mentioned above, main tasks are execution layers in local computation and synchronization tasks deal with remote communication.
The required lifetime of local computation $\tau_{\local}$ is calculated by \alg{rl-calculation}, while layer index is replaced by the start time $j_{ij}$ of the corresponding main task. The required lifetime of remote communication $\tau_{\remote}$ is essentially caused by inter-layer fusion between the corresponding execution layer and connection layer. Consequently,
$$
\tau_{\remote} = \max_{S_k \text{ associated with } J_{i,j}}\abs(s_k - j_{i,j}).
$$
The final objective is $\min \max (\tau_{\local}, \tau_{\remote})$.
\end{definition}

\begin{theorem}[NP-hardness of Layer Scheduling]\label{thm:NP}
Layer scheduling is NP-hard and, furthermore, cannot be approximated within any constant factor in polynomial time unless P = NP.
\end{theorem}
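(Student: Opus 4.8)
The plan is to prove both parts with one gap‑introducing reduction. I reduce from 3-SAT: given a formula $\phi$ and an integer parameter $M$ (fixed later as a growing polynomial in $|\phi|$, e.g.\ $M=|\phi|$), I construct in time $\mathrm{poly}(|\phi|,M)$ an LSP instance $I(\phi,M)$ with horizon $T=\Theta(M)$ whose optimum satisfies (i) $\mathrm{OPT}\big(I(\phi,M)\big)=O(1)$ if $\phi$ is satisfiable, and (ii) $\mathrm{OPT}\big(I(\phi,M)\big)\ge M$ if $\phi$ is unsatisfiable. From (i)--(ii): for any constant $\rho$, a polynomial‑time $\rho$‑approximation would separate satisfiable from unsatisfiable formulas once $M$ exceeds $\rho$ times the constant of (i); since $M$ may grow with $|\phi|$, this rules out every constant‑factor approximation unless $\mathrm{P}=\mathrm{NP}$, and the case $\rho=1$ already gives NP‑hardness. (If one only wants NP‑hardness, a quicker route is available: with a tight horizon $T$, LSP feasibility alone encodes graph coloring, so any algorithm that reliably returns a feasible schedule decides an NP‑hard problem.)

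The encoding uses three gadget types, exploiting two levers. First, long directed chains in the dependency graph $G$: unrolling Part~2 of \alg{rl-calculation} gives $\maxparent[P_\ell]\ge j_{P_1}+\ell$ for the last node of a length‑$\ell$ chain, hence $j_{P_\ell}\ge j_{P_1}+\ell-\tau_{\local}$, so chains make $\tau_{\local}$ penalize out‑of‑order placement. Second, synchronization tasks under machine exclusivity: a sync task and a main task can never share a slot on the same QPU, and the $\lceil\,\cdot/K_{\max}\rceil$ ceiling limits how many sync tasks can. (a)~\emph{Backbone gadgets}: chains of length $\Theta(M)$ that nearly fill the horizon; since their nodes need distinct slots and reversals are punished by $\tau_{\local}$, a backbone behaves as a coordinate reference that localizes any task hung off it. (b)~\emph{Variable gadgets}: for each variable $x$, a small sub‑instance whose only low‑cost schedules place an indicator task in one of two windows, reading off $x\in\{\mathsf{true},\mathsf{false}\}$. (c)~\emph{Clause gadgets}: for each clause $c$, a QPU tied by synchronization tasks to its three literal windows; these links are satisfiable with $\tau_{\remote}=O(1)$ exactly when some literal sits in its satisfying window, and otherwise the sync link to $c$ must span a whole backbone segment, forcing $\tau_{\remote}\ge M$.

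Correctness splits as usual. Forward: from a satisfying assignment, write the schedule explicitly — backbones as coordinate rulers, each gadget in its window, each sync task at its $O(1)$ offset — then check machine exclusivity (including the $K_{\max}$ ceiling) and, by running \alg{rl-calculation} on this layout with start times in place of layer indices, that $\tau_{\local}$ and $\tau_{\remote}$ are $O(1)$, so $\tau_{\photon}=O(1)$. Backward (contrapositive): from any schedule with $\tau_{\photon}<M$, use the backbone coordinate structure to force every variable gadget into one of its windows, obtaining a Boolean assignment; if $\phi$ is unsatisfiable this assignment violates some clause $c$, and propagating along $c$'s synchronization links and the incident backbone exhibits a term $|s_k-j_{i,j}|\ge M$, contradicting $\tau_{\photon}<M$. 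Hence an unsatisfiable $\phi$ forces $\mathrm{OPT}\ge M$.

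I expect the main obstacle to be making the gap genuinely \emph{unbounded} rather than a fixed constant. The objective couples $\tau_{\local}$ and $\tau_{\remote}$ through the same start‑time variables, and in the backward direction the only hypothesis available is $\tau_{\photon}<M$, so every pin derived from a backbone carries slack of size up to $M$ — the very scale of the gadgets. One therefore needs either a backbone design whose coordinate structure cannot be globally shifted or sheared by that much slack without violating exclusivity or inflating $\tau_{\local}$, or a serial/recursive composition that amplifies a modest gap into an unbounded one. Calibrating the gadget parameters — chain lengths $\Theta(M)$, window widths $O(1)$, QPU count, $K_{\max}$ — so that the forward schedule fits within $O(1)$ while any unsatisfiable instance is forced up to $M$, and matching this with the lower‑bound counting argument, is the delicate core; the remaining steps (feasibility within the horizon, the exclusivity arithmetic, re‑executing \alg{rl-calculation}) are routine.
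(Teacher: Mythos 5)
There is a genuine gap: what you have written is an architecture for a reduction, not a reduction. The backbone, variable, and clause gadgets are described only at the level of intent, and the correctness argument is deferred to a calibration step that you yourself flag as unresolved. Crucially, the obstruction you identify in your last paragraph is not a technicality but is fatal to the proposed design. In the backward direction your only hypothesis is $\tau_{\photon}<M$, i.e.\ $\max(\tau_{\local},\tau_{\remote})<M$, and every localization you extract from a backbone therefore comes with slack $\Theta(M)$ --- exactly the width of the gadgets. An $O(1)$-wide ``window'' cannot be enforced by a penalty that is itself only bounded by $M$: a schedule may shear each variable gadget by $\Theta(M)$ without violating exclusivity or inflating any stretch term past $M$, so no Boolean assignment can be read off and the clause argument never gets started. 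This is precisely why constant-factor inapproximability for bandwidth-type min-max stretch objectives is a genuinely hard theorem (Unger's result) rather than a routine gap reduction from 3-SAT; your plan would in effect have to reprove it from scratch, and the composition/amplification step you gesture at is the entire missing content.

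The paper avoids all of this with a two-line, objective-preserving reduction from the graph bandwidth problem: a single QPU, one main task $J_u$ per vertex $u$, and a pair of fusees for each edge $(u,v)$, so that the required photon lifetime of a schedule $j$ equals the bandwidth of the linear layout $f(u)=j(J_u)$. Because the objective values coincide exactly, NP-hardness is inherited from Papadimitriou's hardness of bandwidth and constant-factor inapproximability is inherited from Unger's theorem, with no gadgets and no gap amplification needed. If you want to salvage your approach, the realistic path is to route through bandwidth (or another problem already known to be hard to approximate under a stretch objective) rather than through 3-SAT directly.
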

\begin{proof}
We prove that, a single-QPU, decision version of layer scheduling problem (LSP) is already NP-hard by a polynomial-time reduction from the decision version of graph bandwidth problem (GBP). GBP asks: given a graph $G=(V, E)$ and an integer $k$, does there exist a linear layout $f$ such that the bandwidth is at most $k$?

Given an arbitrary instance of GBP $(G, k)$, we construct an instance of single-QPU LSP in polynomial time. We define a set of main task $\mathcal{J} = \{J_u, \forall u \in V\}$, and introduce a pair of fusees on the corresponding execution layers for each edge $(u, v) \in E$. This construction is polynomial in the size of the input graph.

We now show that the graph $G$ has a bandwidth of at most $k$ if there exists a valid schedule for the constructed LSP instance with a maximum required lifetime of at most $k$. 
Assuming a solution $j: \mathcal{J} \rightarrow \mathcal{T}$ exists, we can define a layout for GBP as $f(u) = j(J_u)$ for all $u \in V$. For any edge $(u,v) \in E$, $|f(u) - f(v)| \leq k$ since the required lifetime caused by fusee pairs on $J_u$ and $J_v$ is at most $k$. This layout demonstrates that the bandwidth of $G$ is at most $k$.

Therefore, $\GBP \leq_p \LSP$, LSP is NP-hard due to the NP-hardness of GBP~\cite{Papadimitiou_1976} Furthermore, it is NP-hard to find an approximation within any constant ratio~\cite{743431}.
\end{proof}

This NP-hardness motivates our development of high-performance heuristics for efficient, high-quality scheduling. We first present a standard priority-based list scheduling heuristic as a baseline, followed by our bottleneck-driven iterative refinement (BDIR) algorithm, which leverages simulated annealing (SA) to systematically improve initial schedules by identifying and resolving performance bottlenecks. Standard priority-based list scheduling~\cite{Pinedo} is a natural starting point due to the sequential locality of main tasks, corresponding to execution layers from a single-QPU compiler passes. However, it often introduces costly bottlenecks by neglecting inter-QPU synchronization and lacking a global view of required photon lifetime. A naive strategy, e.g., relocating random tasks, fails to address performance bottlenecks and frequently violates dependency constraints. Designing effective neighborhood generation is challenging due to:
\begin{itemize}
    \item How to select a bottleneck task and its target destination? Due to the global min-max objective function, there exist conflicting optimization pressures from different cost sources, restricting the move of bottleneck tasks.
    \item How to execute this move without violating restrictive synchronization constraints or catastrophically disrupting the existing schedule structure? Synchronization constraints invalidate standard swap operator since synchronization tasks must be scheduled at the same time on two QPUs, thereby expanding the influence scope of swap operator.
\end{itemize}

To address these two challenges, we designed the Bottleneck-Driven Iterative Refinement (BDIR) algorithm (\alg{bdir}), a ``smart" neighborhood generation function embedded within a lightweight Simulated Annealing (SA) framework. The $\Call{GenerateNeighbor}{}$ function is not random; it precisely targets the schedule's primary bottleneck. First, $\Call{FindBottleneckTask}{}$ identifies the task $N$ responsible for the current maximum required photon lifetime ($\tau_{photon}$). Second, instead of moving $N$ randomly, the $\Call{CalculateBalancePoint}{}$ heuristic determines a near-optimal target time slot. It does this by evaluating the local cost ($N$'s contribution to $\tau_{photon}$) at different time slots, assuming all other tasks remain fixed at their current positions ($L_{current}$). The chosen slot $t$ is the ``temporal equilibrium point" that best balances the cost pressures from $N$'s preceding dependencies and its remote fusion partners. Third, the $\Call{PinAndReschedule}{}$ function executes the move. It pins $N$ to the target time $t$ and re-schedules all other tasks using the standard priority-based list scheduler. Crucially, to preserve the existing schedule structure, the priority for each re-scheduled task is set to its start time in the original schedule ($L_{current}$). This ensures the schedule's relative ordering is maintained while resolving any constraints violated by moving $N$. Finally, this ``smart" $\Call{GenerateNeighbor}{}$ function is wrapped in an SA framework. 
The initial schedule $L_{init}$ is generated using list scheduling, where a main task $J_{i,j}$'s priority is its index $j$, and a synchronization task $S_k$ (for $J_{i,j}, J_{i',j'}$) has priority $\frac{1}{2}(j + j')$.

\begin{algorithm}
\caption{Bottleneck-Driven Iterative Refinement (BDIR)}
\label{alg:bdir}
\begin{algorithmic}[1]
\State \textbf{Input: } SA initial temperature $T_0$, cooling rate $\alpha$, max iterations $I_{\max}$, initial schedule $L_{\init}$
\State \textbf{Output: } optimized schedule $L_{\best}$
\Statex
\Function{BDIR}{$T_0, \alpha, I_{\max}, L_{\init}$}
    \State $L_{\current} \gets L_{\init}$, $T \gets T_0$
    \State $L_{\best} \gets L_{\init}$, $c_{\best} \gets \Call{ComputeLifetime}{L_{\best}}$
    \For{$i = 1 \to I_{max}$}
        \State $L_{\new} \gets \Call{GenerateNeighbor}{L_{\current}}$ 
        \State $c_{\current} \gets \Call{ComputeLifetime}{L_{\current}}$
        \State $c_{\new} \gets \Call{ComputeLifetime}{L_{\new}}$
        \State $\Delta E \gets c_{\new} - c_{\current}$
        \If{$\Delta E \leq 0$ \textbf{or} $\Call{Random}{0,1} < e^{-\Delta E / T}$}
            \State $L_{\current} \gets L_{\new}$
        \EndIf
        \If{$c_{\current} < c_{\best}$}
            \State $L_{\best} \gets L_{\current}$, $c_{\best} \gets c_{\current}$
        \EndIf
        \State $T \gets \alpha \cdot T$
    \EndFor
    \State \textbf{return} $L_{\best}$
\EndFunction

\Statex
\Function{GenerateNeighbor}{$L_{\current}$}
    \State $N \gets \Call{FindBottleneckTask}{L_{\current}}$ 
    \State $t \gets \Call{CalculateBalancePoint}{L_{current}, N}$
    \State $L_{\new} \gets \Call{PinAndReschedule}{L_{\current}, N, t}$ 
    \State \textbf{return} $L_{\new}$
\EndFunction
\end{algorithmic}
\end{algorithm}

\section{Evaluation} \label{sec:evaluation}

\subsection{Experiment Setup} \label{sec:setup}

\noindent \textbf{Baseline.} 
The baseline method we compare to is the OneQ compiler~\cite{zhang2023oneq}, a recent compilation framework specifically tailored for photonic MBQC architecture.
While more recent frameworks such as OnePerc~\cite{zhang2024oneperc} and OneAdapt~\cite{zhang2025oneadaptadaptivecompilationresourceconstrained} have introduced further optimizations, OneQ established the foundational methodology for mapping MBQC programs to the 3D resource grid. Crucially, these successors build upon the core compilation stages defined in OneQ. Comparisons with OnePerc and OneAdapt are discussed in \sec{sota}.

\noindent \textbf{Metric.}
As discussed in \sec{intro}, prolonged photon storage in delay lines dramatically increases the probability of photon loss. This loss is a primary source of fatal errors that fatally compromise program fidelity, especially at realistic clock rates (see \fig{required-lifetime}). Therefore, our core metric is the required photon lifetime (\sec{lifetime}) , which serves as a critical, high-level proxy for this storage-induced fidelity degradation. We also consider \textit{execution time} as an important metric, which is the number of clock cycles needed to run an MBQC program.

\begin{table}[b]
\centering
  \caption{Benchmark programs.}
  \label{tab:benchmark}
  \begin{tabular}{c|c|c|c|c}
    \toprule
    Program & \#Qubits & Grid size & \#2Q gates & \#Fusion \\
    \midrule
    \midrule
    \multirow{4}{*}{VQE}  & 16    & $7\times7$  & 120 &  408  \\ 
                          & 36  & $11\times11$  & 630 & 2178  \\
                          & 81  & $17\times17$  & 3240 & 11280  \\
                          & 144 & $23\times23$ & 10296 & 35928\\
    \midrule
    \multirow{4}{*}{QAOA} & 16  & $7\times7$  & 47 & 487 \\ 
                          & 64  & $15\times15$  & 799 & 7316  \\
                          & 121  & $21\times21$  & 2843 & 25826  \\
                          & 196 & $27\times27$ & 7528 & 68141\\
    \midrule
    \multirow{4}{*}{QFT}  & 16    & $7\times7$  & 120 &  408  \\ 
                          & 36  & $11\times11$  & 630 & 2178  \\
                          & 81  & $17\times17$  & 3240 & 11280  \\
                          & 100 & $19\times19$ & 4950 & 64450\\
    \midrule
    \multirow{3}{*}{RCA} & 16  & $7\times7$  & 209 & 1108 \\ 
                          & 36  & $11\times11$  & 529 & 2808  \\
                          & 81 & $17\times 17$ & 1249 & 6633 \\
    \bottomrule
\end{tabular}
\end{table}

\noindent \textbf{Benchmark programs.} 
We select the Quantum Approximate Optimization Algorithm (QAOA)~\cite{farhi2014quantum}, Variational Quantum Eigensolver (VQE)~\cite{peruzzo2014variational}, Quantum Fourier Transform (QFT)~\cite{coppersmith2002approximate}, and Rippler-Carry Adder (RCA)~\cite{cuccaro2004new} as benchmarks, which consist of algorithmic primitives for quantum programming (QFT and RCA) and application orientated programs (QAOA for optimization problems and VQE for quantum chemistry simulation). 
In \tab{benchmark}, we list the number of qubits, the spatial grid size (i.e., the size of each 2D logical resource layer), the number of 2-qubit gates and the number of fusions, which are edges in the computation graph in OneQ.
For QAOA, the circuits are generated from the Max-Cut problem on random graphs. The graphs are generated by randomly selecting half of all possible edges.
For VQE, we use the hardware efficient ansatz~\cite{kandala2017hardware} with fully entangled layers, where each qubit pair gets connected through a CNOT gate and hence this leads to a quadratically increasing 2-qubit gate counts with respect to the number of qubits.
Similar gate counts are observed for QFT programs.
The largest cases we evaluated are the 196-qubit QAOA program which contains 7528 2-qubit gates and 68141 fusions.

\noindent \textbf{Evaluation Platform.} All compilation runtime was measured on a workstation equipped with an Intel(R) Xeon(R) Gold 6459C with operation system Ubuntu 22.04.5 LTS. All reported compilation time represents the average of five runs to ensure measurement stability.

\noindent \textbf{Parameter Settings.} For the adaptive graph partitioning (\alg{graph-partition}), we set the improvement threshold $\epsilon_Q = 0.01$ and step factor $\gamma = 1.02$. For the BDIR scheduler (\alg{bdir}), we used an initial temperature $T_0 = 10$, a cooling rate $\alpha = 0.95$, and $I_{max} = 20$ iterations. 
The two key parameters, connection capacity $K_{max}$ and maximum imbalance factor $\alpha_{max}$, were set to default values of $K_{max}=4$ and $\alpha_{max}=1.5$ for all main experiments, unless otherwise specified. A detailed sensitivity analysis of these two parameters is presented in \sec{sensitivity}. 

    \subsection{Experiment Result} \label{sec:result}

\begin{table*}[ht]
    \centering
    \caption{The results of \name\ and its relative performance to the baseline using 4 QPUs and 5-star RSG.}
    \label{tab:dcmbqc_results_4qpu}
    \begin{tabular}{c|c|c|c|c|c|c}
        \toprule
        Program-\#Qubits & Baseline Exec. Time& Our Exec. Time & Improv. Factor & Baseline Lifetime & Our Lifetime & Improv. Factor \\
        \midrule
        \midrule
        VQE-16  & 46  & 21 & 2.19 & 38  & 14 & 2.71 \\
        VQE-36  & 155 & 46 & 3.37 & 143 & 36 & \textbf{3.97} \\
        VQE-81  & 462 & 128 & 3.61 & 441 & 112 & 3.94 \\
        VQE-144  & 1022 & 278 & 3.68 & 1000 & 258 & 3.88 \\
        \cmidrule{1-7}
        QAOA-16 & 45  & 16 & 2.81 & 36  & 10 & 3.60 \\
        QAOA-64 & 284  & 83 & 3.42 & 265  & 70 & 3.79 \\
        QAOA-121 & 731 & 198 & 3.69 & 717 & 183 & 3.92 \\
        QAOA-196 & 1401 & 394 & 3.56 & 1370 & 369 & 3.71 \\
        \cmidrule{1-7}
        QFT-16  & 126 & 35 & 3.60 & 107 & 28 & 3.82 \\
        QFT-36  & 364 & 101 & 3.60 & 333 & 81 & 4.11 \\
        QFT-81  & 1000	& 293 &	3.41 & 975	& 264 &	3.69\\
        QFT-100  & 1295 & 386 & 3.35 & 1272 & 344 & 3.70 \\
        \cmidrule{1-7}
        RCA-16  & 103 & 27 & \textbf{3.81} & 90 & 26 & 3.46 \\
        RCA-36  & 121 & 34 & 3.56 & 123 & 57 & 2.16 \\
        RCA-81 & 158 & 44 & 3.59 & 218 & 135 & 1.61 \\
        \bottomrule
    \end{tabular}
\end{table*}

\begin{table*}[ht]
    \centering
    \caption{The results of \name\ and its relative performance to the baseline using 8 QPUs and 4-star RSG.}
    \label{tab:dcmbqc_results_8qpu}
    \begin{tabular}{c|c|c|c|c|c|c}
        \toprule
        Program-\#Qubits & Baseline Exec. Time & Our Exec. Time & Improv. Factor & Baseline Lifetime & Our Lifetime & Improv. Factor \\
        \midrule
        \midrule
        VQE-16  & 54 & 14 & 3.86 & 45  & 10  & 4.50\\
        VQE-36  & 155 & 40 & 5.17 & 140 & 23 & 6.09 \\
        VQE-81  & 471 & 72 & 6.54 & 441 & 66 & 6.78 \\
        VQE-144  & 1016 & 149 & \textbf{6.82} & 991 & 137 & 7.23 \\
        \cmidrule{1-7}
        QAOA-16 & 48  & 14 & 3.43 & 39  & 11 & 3.55 \\
        QAOA-64 & 273  & 56 & 4.88 & 253  & 43 & 5.88 \\
        QAOA-121 & 712 & 116 & 6.14 & 696 & 106 & 6.57 \\
        QAOA-196 & 1401 & 204 & 6.87 & 1372 & 186 & 7.38 \\
        \cmidrule{1-7}
        QFT-16  & 120 & 21 & 5.71 & 110 & 18 & 6.11 \\
        QFT-36  & 356 & 54 & 6.59 & 329 & 46 & 7.15 \\
        QFT-81  & 1036 & 152 & 6.81 & 993 & 133 & \textbf{7.46} \\
        QFT-100  & 1250 & 186 & 6.72 & 1221 & 167 & 7.31 \\
        \cmidrule{1-7}
        RCA-16  & 103 & 22 & 4.68 & 90 & 29 & 3.10 \\
        RCA-36  & 121 & 18 & 6.72 & 123 & 54 & 2.28 \\
        RCA-81 & 138 & 27 & 5.11 & 203 & 126 & 1.61\\
        \bottomrule
    \end{tabular}
\end{table*}

We first show results for comparing the baseline OneQ compiler and our DC-MBQC compiler (using 4 QPUs and 5-star resource state) in terms of execution time and required photon lifetime in \tab{dcmbqc_results_4qpu}.
QneQ compiler is designed for a single QPU, while the purposed compiler design utilizes multiple QPUs.
It can be seen that DC-MBQC consistently outperforms the baseline in both execution time and required photon lifetime up to $3.97\times$ (for 36-qubit VQE program).

\begin{figure}[ht]  
\centering 
\begin{tikzpicture}
    \begin{axis}[
        width=\columnwidth, %
        height=6cm,       %
        ybar,
        bar width = 6pt,
        xtick={0, 20, 40, 60},
        ytick={1,2,3,4,5},
        xticklabels={QAOA, VQE, QFT, RCA},
        ymin=0, ymax=5.3,
        ylabel={Improv. Factor (Execution Time)},
        legend style={at={(0.5,0.98)}, anchor=north, legend columns=-1},
        enlarge x limits=0.3,
        legend image code/.code={
        \draw[fill=#1] (0cm,-0.1cm) rectangle (0.4cm,0.1cm);}
    ]   
        \draw [dashed, thin, color=gray](-20,3) -- (100,3);
        \draw [dashed, thin, color=gray](-20,4) -- (100,4);
        \addplot[fill=customblue] coordinates {(0, 3.92857) (20, 3.83673) (40, 4.04878) (60, 4.36585)};
        \addplot[fill=orange] coordinates {(0, 3.13) (20, 3.06) (40, 3.61856 ) (60, 3.48485)};
        \addplot[fill=rust] coordinates {(0, 3.46667) (20, 3.08824 ) (40, 3.51948) (60, 3.24138)};
        \addplot[fill=silver] coordinates {(0, 3.69444) (20, 3.04) (40,3.66667) (60, 3.81081)};
        \legend{4-ring, 5-star, 6-ring, 7-star}
    \end{axis}
\end{tikzpicture}
\begin{tikzpicture}
    \begin{axis}[
        width=\columnwidth, %
        height=6cm,       %
        ybar,
        bar width = 6pt,
        xtick={0, 20, 40, 60},
        ytick={1,2,3,4,5,6},
        xticklabels={QAOA, VQE, QFT, RCA},
        ymin=0, ymax=6,
        ylabel={Improv. Factor (Required Lifetime)},
        legend style={at={(0.5,0.98)}, anchor=north, legend columns=-1},
        enlarge x limits=0.3,
        legend image code/.code={
        \draw[fill=#1] (0cm,-0.1cm) rectangle (0.4cm,0.1cm);}
    ]   
        \draw [dashed, thin, color=gray](-20,2) -- (100,2);
        \draw [dashed, thin, color=gray](-20,3) -- (100,3);
        \draw [dashed, thin, color=gray](-20,4) -- (100,4);
        \draw [dashed, thin, color=gray](-20,5) -- (100,5);
        \addplot[fill=customblue] coordinates {(0, 4.54286) (20, 4.58974) (40, 4.73) (60, 2.45902)};
        \addplot[fill=orange] coordinates {(0, 3.70) (20, 3.81579) (40, 4.26923) (60, 1.98305)};
        \addplot[fill=rust] coordinates {(0, 3.70) (20, 3.34483) (40, 3.84615 ) (60, 1.95)};
        \addplot[fill=silver] coordinates {(0, 4.88462) (20, 3.475) (40, 4.45) (60, 2.19298)};
        \legend{4-ring, 5-star, 6-ring, 7-star}
    \end{axis}
\end{tikzpicture}
\caption{Improvements of \name\ over baseline on 36-qubit QAOA, QFT, RCA, and VQE using 4 QPUs and different resource states as illustrated in \fig{rs}~(a). The improvement factor is defined as $
    f\equiv \tau^{\textbf{OneQ}}/\tau^{\textbf{DC-MBQC}}
$.}
\label{fig:improv}
\end{figure}

\noindent \textbf{Analysis of Performance Gains.} The substantial gains over the monolithic baseline stem from the parallelism inherent in the distributed architecture, which our framework enables by co-optimizing local computation and global synchronization. First, our adaptive graph partitioning (\alg{graph-partition}) optimizes modularity (while balancing workload) to produce structurally-optimized subgraphs. This reduces the complexity for the single-QPU compiler, allowing it to find a more compact local schedule (i.e., fewer execution layers) for each QPU. Second, our framework systematically reduces the synchronization overhead. The partitioning algorithm minimizes the number of inter-QPU connectors (cut edges) , while our BDIR layer scheduler (\alg{bdir}) heuristically reduces their scheduling cost (required photon lifetime).

\noindent \textbf{Different Resource States.}
Our framework also supports different resource states introduced in \sec{mbqc-assumptions}.
We compare the effect of 4 resource states on execution time improvements and required lifetime improvements in \fig{improv}.
The 4-ring RSG outperforms other candidates in most benchmarks.
For the required photon lifetime, the 6-ring RSG are consistently worse than other RSG type.
This is due to the unique topology. 
Supposing that a diagonal pair of photons in a 6-ring resource state is removed, there are two 2-qubit resource states left, both of which can be utilized for routing. Therefore, a 6-ring resource state can be used as routing resource twice, while other resource states can be used only once. Since the main challenge of compiling for a single QPU is to deal with complex routing demands, the 6-ring RSG is especially convenient for single QPU.

\noindent \textbf{Number of QPUs.} 
We further study the impact of using 8 QPUs for distributed MBQC in \tab{dcmbqc_results_8qpu}. 
The improvements are consistently better than the 4 QPUs case in \tab{dcmbqc_results_4qpu} with up to $7.46\times$ improvements on the required photon lifetime for 81-qubit QFT.

\subsection{Comprison with SOTA Compilers}\label{sec:sota}
As established in \sec{compiler}, the core task of monolithic compilers like OneQ, OnePerc, and OneAdapt is to solve the complex mapping of a computation graph onto a constrained 3D spatio-temporal resource grid. Our DC-MBQC framework is explicitly designed to be modular around this abstraction (see \fig{pipeline}). 
In this section, we clarify this relationship with OnePerc and present new comparative data against OneAdapt.

\noindent \textbf{OnePerc} OnePerc's primary contribution is its novel online, randomness-aware execution strategy (e.g., renormalization) to address probabilistic fusion failures~\cite{zhang2024oneperc}. This online problem is orthogonal to our framework. For its offline compilation pass (the stage that maps the logical graph to the 3D resource grid) OnePerc's authors state that they ``utilize" and ``extend" the foundational OneQ mapping algorithm. Therefore, our baseline comparison against OneQ in \sec{result} directly evaluates this core compilation stage, and the results are representative of the offline compilation performance for both OneQ and OnePerc.

\noindent \textbf{OneAdapt} 
OneAdapt  introduces a dynamic refreshing technique to mitigate photon loss. To adapt this monolithic compiler for our distributed comparison, we model the inter-QPU communication overhead via boundary resource reservation. Specifically, we dedicate the boundary resource states on each layer to serve as communication interfaces, effectively reducing the grid size by 2 in each dimension.
This approach provides a consistent estimation of the communication overhead, allowing us to focus on the algorithmic performance comparison between the distributed architecture and the single-QPU baseline.
Our experimental results, summarized in Table \tab{oneadapt}, demonstrate that our distributed approach still provides substantial additive gains on top of OneAdapt. For instance, with 8 QPUs, DC-MBQC achieves up to a $5.74\times$ execution time speedup and a $4.33\times$ reduction in required photon lifetime.

\begin{table*}[ht]
 \centering
 \caption{The results of \name\ and its relative performance to OneAdapt using 4 QPUs and 8 QPUs.}\label{tab:oneadapt}
 \begin{tabular}{c|c|c|c|c|c|c|c}
 \toprule
 \#QPUs & Program-\#Qubits & OneAdapt Exec. Time & Our Exec. Time & Improv. Factor & OneAdapt Lifetime & Our Lifetime & Improv. Factor \\
 \midrule
 \midrule
 \multirow{6}{*}{4} & VQE-64 & 306 & 92 & \textbf{3.32} & 13 & 6 & 2.17 \\
 & VQE-100 & 552 & 198 & 2.79 & 20 & 14 & 1.43 \\
 \cmidrule{2-8}
 & QAOA-64 & 332 & 111 & 2.99 & 14 & 7 & 2.00 \\
 & QAOA-121 & 705 & 246 & 2.87 & 20 & 13 & 1.54 \\
 \cmidrule{2-8}
 & QFT-36 & 313 & 101 & 3.10 & 10 & 9 & 1.11 \\
 & QFT-64 & 715 & 216 & 3.31 & 20 & 6 & \textbf{3.33} \\
 \cmidrule{1-8}
 \multirow{6}{*}{8} & VQE-64 & 327 & 57 & \textbf{5.74} & 20 & 6 & 3.33 \\ 
 & VQE-100 & 599 & 108 & 5.55 & 20 & 10 & 2.00 \\
 \cmidrule{2-8}
 & QAOA-64 & 326 & 71 & 4.59 & 9 & 3 & 3.00 \\
 & QAOA-121 & 706 & 175 & 4.03 & 20 & 5 & 4.00 \\
 \cmidrule{2-8}
 & QFT-36 & 319 & 76 & 4.20 & 13 & 3 & \textbf{4.33} \\
 & QFT-64 & 673 & 138 & 4.88 & 20 & 5 & 4.00 \\
 \bottomrule
 \end{tabular}
\end{table*}

\subsection{Component Analysis}
We compare our BDIR algorithm against a standard priority-based list scheduling baseline. In this experiment, we use the full DC-MBQC framework, only swapping the final layer scheduling component. \tab{bdir_results_4qpu} demonstrates that the BDIR algorithm consistently reduces the required photon lifetime compared to the baseline, achieving improvements ranging from 4.62\% to a significant 15.12\%.
\begin{table}[hb]
    \centering
    \caption{Effectiveness of BDIR}
    \label{tab:bdir_results_4qpu}
    \begin{tabular}{c|c|c|c}
        \toprule
        Program-\#Qubits & Baseline Lifetime & BDIR Lifetime & Improv. Factor \\
        \midrule
        \midrule
        QFT-16  &  33 &  31 &  6.06\% \\
        QFT-25  &  65 &  62 &  4.62\% \\
        QFT-36  &  86 &  73 & 15.12\% \\
        QFT-49  & 140 & 124 & 11.43\% \\
        QFT-64  & 196 & 184 &  6.12\% \\
        \bottomrule
    \end{tabular}
\end{table}

\subsection{Sensitivity Analysis}\label{sec:sensitivity}
In this section, we perform sensitivity analysis of two key parameters that are central to our framework: the hardware's connection capacity $K_{\max}$ and the maximum imbalance factor $\alpha_{\max}$ used in our adaptive graph partitioning algorithm (\alg{graph-partition}). We select 36-qubit QFT program as representative benchmark, as QFT requires the most number of 2-qubit gates and fusions in \tab{benchmark}. For $K_{\max}$, we also conduct sensitivity analysis on 25-qubit QFT instance to demonstrate the consistency of our findings.

For $K_{\max}$, \fig{sensitivity-K} exhibits a clear pattern of diminishing returns for both execution time and required lifetime when $K_{\max}$ increases, with the elbow point occuring at around $K_{\max} = 4$ to $7$. This trend suggests that inter-QPU communication is the primary bottleneck only for small $K_{\max}$, demonstrating that our \name\ framework achieves significant performance gains without demanding an impractically high connection capacity.

Parameter $\alpha_{\max}$ governs the trade-off between strict workload balance among QPUs and the structural quality of the resulting subgraphs in our adaptive graph partitioning algorithm. \fig{sensitivity-alpha} illustrates performance improvement of execution time and required lifetime is robust against $\alpha_{\max}$, fluctuating only within a narrow range. Remarkably, we find that for the entire tested range of $\alpha_{\max}$ (from 1.05 upwards), the graph partition produced by \alg{graph-partition} remains identical, with a constant cut size of 60 and a modularity of 0.74. Therefore, the observed performance variation does not stem from the partitioning quality but exclusively from the stochastic nature of subsequent local single-QPU compilation and global layer scheduling.

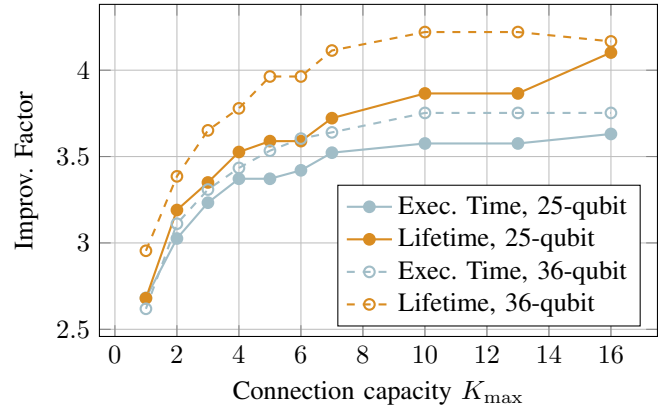
\begin{figure}[ht]
    \centering
    \begin{tikzpicture}
        \begin{axis}[
            xlabel={Connection capacity $K_{\max}$},
            ylabel={Improv. Factor},
            width=9cm,
            height=6cm,
            legend pos=south east,
            legend cell align={left},
            grid=major,
        ]
        \addplot[
            thick,
            solid,
            color=lightblue,
            mark=*,
            mark options={solid}
        ] coordinates {
            (1, 2.6818181818181817) (2, 3.025641025641025) (3, 3.232876712328767) (4, 3.3714285714285714) (5, 3.3714285714285714) (6, 3.420289855072464) (7, 3.5223880597014925) (10, 3.575757575757576) (13, 3.575757575757576) (16, 3.630769230769231) 
        };
        \addlegendentry{Exec. Time, 25-qubit}
        \addplot[
            thick,
            solid,
            color=orange,
            mark=*,
            mark options={solid}
        ] coordinates {
            (1, 2.68) (2, 3.1904761904761907) (3, 3.35) (4, 3.526315789473684) (5, 3.5892857142857144) (6, 3.5892857142857144) (7, 3.7222222222222223) (10, 3.865384615384616) (13, 3.865384615384616) (16, 4.1020408163265305) 
        };
        \addlegendentry{Lifetime, 25-qubit}

        \addplot[
            thick,
            dashed,
            color=lightblue,
            mark=o,
            mark options={solid}
        ] coordinates {
            (1, 2.618705035971223) (2, 3.111111111111111) (3, 3.309090909090909) (4, 3.4339622641509435) (5, 3.533980582524272) (6, 3.603960396039604) (7, 3.64) (10, 3.752577319587629) (13, 3.752577319587629) (16, 3.752577319587629) 
        };
        \addlegendentry{Exec. Time, 36-qubit}
        \addplot[
            thick,
            dashed,
            color=orange,
            mark=o,
            mark options={solid}
        ] coordinates {
            (1, 2.9545454545454546) (2, 3.3854166666666665) (3, 3.651685393258427) (4, 3.779069767441861) (5, 3.963414634146341) (6, 3.963414634146341) (7, 4.113924050632911) (10, 4.220779220779221) (13, 4.220779220779221) (16, 4.166666666666667) 
        };
        \addlegendentry{Lifetime, 36-qubit}

        \end{axis}
    \end{tikzpicture}
    \caption{Impact of connection capacity $K_{\max}$. }\label{fig:sensitivity-K}
\end{figure}

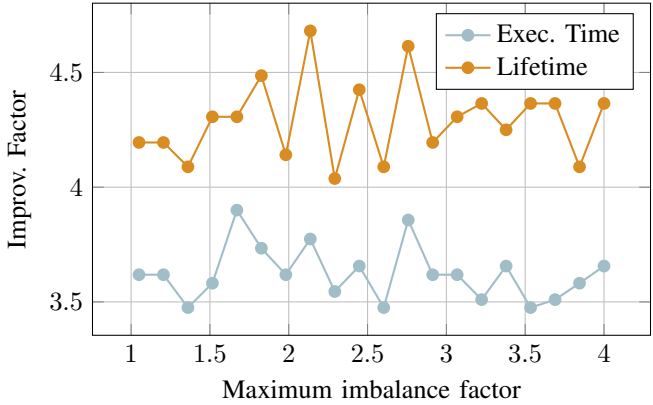
\begin{figure}[ht]
    \centering
    \begin{tikzpicture}
        \begin{axis}[
            xlabel={Maximum imbalance factor},
            ylabel={Improv. Factor},
            width=9cm,
            height=6cm,
            legend pos=north east,
            legend cell align={left},
            grid=major,
        ]
        \addplot[
            thick,
            color=lightblue,
            mark=*,
            mark options={solid}
        ] coordinates {
            (1.05, 3.618556701030928) (1.2052631578947368, 3.618556701030928) (1.3605263157894738, 3.4752475247524752) (1.5157894736842106, 3.5816326530612246) (1.6710526315789471, 3.9) (1.8263157894736843, 3.734042553191489) (1.981578947368421, 3.618556701030928) (2.136842105263158, 3.774193548387097) (2.292105263157895, 3.5454545454545454) (2.447368421052632, 3.65625) (2.602631578947369, 3.4752475247524752) (2.7578947368421054, 3.857142857142857) (2.913157894736842, 3.618556701030928) (3.068421052631579, 3.618556701030928) (3.223684210526316, 3.51) (3.378947368421053, 3.65625) (3.53421052631579, 3.4752475247524752) (3.689473684210527, 3.51) (3.844736842105263, 3.5816326530612246) (4.0, 3.65625)
        };
        \addlegendentry{Exec. Time}

        \addplot[
            thick,
            color=orange,
            mark=*,
            mark options={solid}
        ] coordinates {
            (1.05, 4.194805194805195) (1.2052631578947368, 4.194805194805195) (1.3605263157894738, 4.0886075949367084) (1.5157894736842106, 4.306666666666667) (1.6710526315789471, 4.306666666666667) (1.8263157894736843, 4.486111111111111) (1.981578947368421, 4.141025641025641) (2.136842105263158, 4.681159420289855) (2.292105263157895, 4.0375) (2.447368421052632, 4.424657534246576) (2.602631578947369, 4.0886075949367084) (2.7578947368421054, 4.614285714285714) (2.913157894736842, 4.194805194805195) (3.068421052631579, 4.306666666666667) (3.223684210526316, 4.364864864864865) (3.378947368421053, 4.25) (3.53421052631579, 4.364864864864865) (3.689473684210527, 4.364864864864865) (3.844736842105263, 4.0886075949367084) (4.0, 4.364864864864865)
        };
        \addlegendentry{Lifetime}

        \end{axis}
    \end{tikzpicture}
    \caption{Robustness of maximum imbalance factor $\alpha_{\max}$.}\label{fig:sensitivity-alpha}
\end{figure}

\subsection{Scalability} \label{sec:scalability}

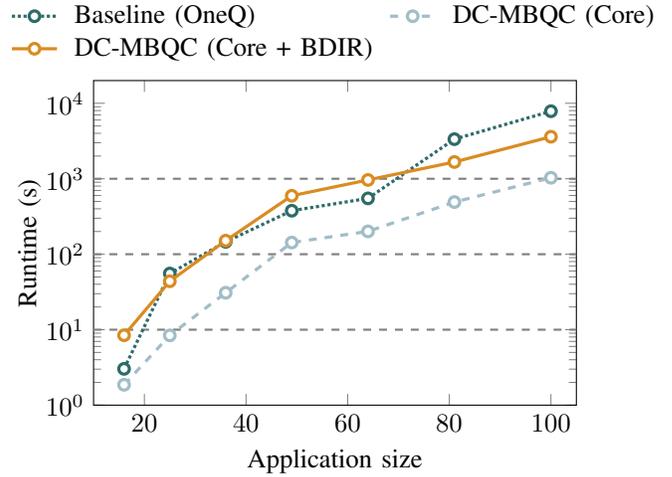
\begin{figure}[ht] 
\centering  
\begin{tikzpicture}
    \begin{axis}[
        xlabel={Application size},
        ylabel={Runtime (s)},
        ylabel style={yshift=-5pt},
        xmin=10,
        xmax=105,
        ymin=1e0, %
        ymax=2e4, %
        ymode=log,
        ytick={1e0, 1e1, 1e2, 1e3, 1e4},
        width=8cm,
        height=5.9cm,
        max space between ticks=40,
        legend style={draw=none, fill=none, at={(0.5,1.03)}, anchor=south, legend columns=2, column sep=4pt },
        clip mode=individual,
        legend cell align={left},
    ]
    
    \draw[gray, dashed, thick] (axis cs:10,10^3) -- (axis cs:105,10^3);
    \draw[gray, dashed, thick] (axis cs:10,10^1) -- (axis cs:105,10^1);
    \draw[gray, dashed, thick] (axis cs:10,10^2) -- (axis cs:105,10^2);

    \addplot[
        very thick,
        densely dotted,
        mark=*,
        mark size=2pt, %
        mark options={solid, fill=white},
        color=seagreen,
        opacity=1
    ] table [
        x=qubits,
        y=time,
        col sep=comma,
        header=true
    ] {data/runtime/oneq.txt};
    \addlegendentry{Baseline (OneQ)};

    \addplot[
        very thick,
        dashed,
        mark=*,
        mark size=2pt, %
        mark options={solid, fill=white},
        color=lightblue,
        opacity=1
    ] table [
        x=qubits,
        y=time,
        col sep=comma,
        header=true
    ] {data/runtime/dqc.txt};
    \addlegendentry{\name\ (Core)};

    \addplot[
        very thick,
        solid,
        mark=*,
        mark size=2pt, %
        mark options={solid, fill=white},
        color=orange,
        opacity=1
    ] table [
        x=qubits,
        y=time,
        col sep=comma,
        header=true
    ] {data/runtime/dqc-bdir.txt};
    \addlegendentry{\name\ (Core + BDIR)};
    
    \end{axis}
\end{tikzpicture}
\caption{Runtime scaling for compiling QFT programs (excluding common pre-processing overhead). \name\ demonstrates better scalability compared to Baseline. The efficiency of \name\ enables compilation of larger quantum programs within the same time budget.}
\label{fig:scaling}
\end{figure}

Compilation time is a key factor for executing large-scale applications~\cite{beverland2022assessing}. To evaluate the scalability of our compilation heuristics, \fig{scaling} compares the runtime of the core compilation stages (excluding common pre-processing) for QFT benchmarks up to 100 qubits. We compare the monolithic (OneQ) baseline against two versions of our 8-QPU framework: DC-MBQC (Core + BDIR) (using our full BDIR optimization with $I_{\max}=20$), and DC-MBQC (Core) (using only the basic priority-based list scheduling without BDIR). \fig{scaling} shows that while compilation time increases for all methods as problem size grows, our distributed approach demonstrates better scaling. The DC-MBQC (Core + BDIR) exhibits a less increase compared to baseline. Furthermore, the DC-MBQC (Core) version offers a compelling trade-off: by omitting the BDIR optimization pass, it achieves faster compilation times while still benefiting from the core framework's partitioning and basic scheduling advantages. This highlights the flexibility of our framework, allowing users to balance compilation effort against the final quality.

\section{Conclusions} \label{sec:conclusion}

In this work, we presented \name, the first distributed quantum compilation framework specifically designed for measurement-based quantum computing. 
Our framework addresses the challenges of distributed MBQC via an adaptive graph partitioning algorithm and a novel apporach to the layer scheduling problem. We introduce the metric of required photon lifetime to evaluate pratical performance. 
Experiments show a 7.46$\times$ reduction in required photon lifetime and a 6.82$\times$ computation speedup on 8 fully-connected QPUs, validating the effectiveness of our approach.

\section*{Acknowledgement}
We thank Zhixin Song for his insightful discussions and inspiring suggestions regarding the distributed architecture. We also gratefully acknowledge him for his substantial assistance with the literature review, figure preparation, and the drafting of the initial introduction. 
We thank the anonymous reviewers for their constructive comments that significantly improved the quality of this paper. 
This work was supported by the National Natural Science Foundation of China (Grant Number 92365117).

\end{document}